\declaretheoremstyle[shaded={rulecolor=gray,rulewidth=1pt, bgcolor={rgb}{1,1,1}}]{boxed}
\declaretheoremstyle[spacebelow=\parsep, spaceabove=\parsep]{unboxed}
\declaretheorem[style=boxed]{lemma}
\declaretheorem[style=boxed,sibling=lemma]{theorem}
\newcommand*{\cM}{\mathcal{M}}
   \newcommand{\con}[1]{\operatorname{Con}(#1)}
	\renewcommand{\vec}[1]{\mathbf{#1}}
	\newcommand{\abs}[1]{\left| #1 \right|}
\begin{document}

\title{Contextuality without access to\newline a tomographically complete set}

\author{Matthew F.\ Pusey}
\affiliation{Department of Computer Science, University of Oxford,  Wolfson Building,  Parks Road, Oxford OX1 3QD, UK}

\author{L\'idia del Rio}
\affiliation{Institute for Theoretical Physics, ETH Zurich, 8093 Z\"{u}rich, Switzerland}

\author{Bettina Meyer}
\affiliation{Institute for Theoretical Physics, ETH Zurich, 8093 Z\"{u}rich, Switzerland}
\affiliation{Niels Bohr Institute, University of Copenhagen, Blegdamsvej 17, 2100 Copenhagen, Denmark}

\date{17th April 2019}

\begin{abstract}
The non-classicality of single quantum systems can be formalised using the notion of contextuality. But can contextuality be convincingly demonstrated in an experiment, without reference to the quantum formalism? The operational approach to contextuality due to Spekkens requires finding operationally equivalent preparation procedures \cite{Spekkens2005}. Previously these have been obtained by demanding indistinguishability under a set of measurements taken to be tomographically complete. In the language of generalised probability theories, this requires the ability to explore all the dimensions of the system's state space. However, if the true tomographically complete set is larger than the set assumed, the extra measurements could break the operational equivalences and hence eliminate the putative contextuality. Such extra dimensions could arise in post-quantum theories, but even if quantum theory is exact there can be unexpected degrees of freedoms due to imperfections in an experiment. Here we design tests of contextuality that are immune to this effect for a given number of extra measurements in the tomographically complete set, even if nothing is known about their statistics. This allows contextuality to be demonstrated with weaker assumptions.
\end{abstract}

\maketitle
\hypersetup{pdftitle={Contextuality without access to a tomographically complete set}}


\setlength{\epigraphwidth}{4in}
\epigraph{\ldots as we know, there are known knowns; there are things we know we know. We also know there are known unknowns; that is to say we know there are some things we do not know. But there are also unknown unknowns: the ones we don't know we don't know.}{Donald Rumsfeld, Department of Defense news briefing, 2002}

\section{Introduction}
\label{sec:introduction}
 
 Many of the ways in which quantum theory departs from classical intuitions can be formalised and unified by the notion of \emph{contextuality} \cite{kochen1968,bell1966}.
 To understand contextuality, it is helpful to take Spekkens' view \cite{Spekkens2005}, which defines a noncontextual model as one where whenever two operational procedures  (like pressing some sequences of buttons in the lab) produce indistinguishable results, then the model should describe the procedures identically at the ontological level. 
 If this is not possible, so that the model is proven to have a fundamental degeneracy at the ontological level, it is said to be contextual. 

However, even the most careful experimental demonstration of such contextuality to date \cite{Mazurek2016} relies on a key assumption: that a tomographically complete set of preparations and measurements were achieved.\footnote{In the language of generalised probabalistic theories, this requires knowing the dimension of the state space.} This allows the determination of which procedures are indistinguishable. If the assumption is false, then operational equivalences may evaporate and the experiment may admit a noncontextual model.

For a simple example with a qubit, imagine an experimenter was only aware of the measurements represented by real-valued POVM elements. In other words, the experimenter believes the Pauli $X$ and $Z$ measurements form a tomographically complete set. Then the experimenter would believe preparations correpsonding to the two eigenstates of the Pauli $Y$ are operationally equivalent. If the two preparations could be shown to differ at the ontological level, the experimenter would claim to have demonstrated contextuality. But this conclusion would be mistaken, because the two preparations are not in fact operationally equivalent, indeed a measurement of Pauli $Y$ can distinguish them perfectly. In more realistic examples, the experimenter will have a firm grasp of quantum theory but may have underestimated the dimension of their system (perhaps due to unexpected non-Markovian interaction with the environment), or quantum theory might not provide an exact description at all.

To ameliorate this difficulty, we will develop tests of contextuality that still work even if there are a certain number of unknown procedures in the tomographically complete set.

In particular, we focus on preparation contextuality and consider a scenario where the true tomographically complete set consists of some of the measurements which can actually be performed (Rumsfeld's "known knowns") but also a known number of measurements whose statistics remain completely unknown ("known unknowns"). This prevents identifying any two given preparations as operationally equivalent, because they may differ on the unknown measurements. Nevertheless, we find that we can sometimes prove that operationally equivalent preparations must exist (which preparations they are depends on the statistics of the unknown measurements), and are able to ground a proof of contextuality on such partially-characterised equivalences.

The remaining loophole is of course that the number of unknown measurements may in fact be larger than assumed (the excess being the dreaded "unknown unknowns"). But even the interpretation of ``loophole free'' Bell experiments depends on discounting certain logical possibilities (such as correlations between measurement settings and hidden variables) on the grounds of physical implausibility \cite{pironio2015}. The potential power of our results is that any analogous plausibility arguments for contextuality experiments need only provide some bound on the total number of measurements in a tomographically complete set, whereas before our results it seemed necessary to rule out the existence of any measurements at all that were not characterised by the measurements that were actually done.

Our main technical results are as follows:
\begin{itemize}
	\item We show in section~\ref{subsec:nofinite} that for any finite number of additional measurements, there is a proof of contextuality (using known measurements as on a qubit) that works regardless of the statistics of the unknown measurements. However, this proof uses a large number of preparations and measurements that must be almost noiseless.
	\item For the simplest case of one unknown measurement, in section~\ref{subsec:simplestcase} we construct a more experimentally-friendly proof that uses the minimal numbers of preparations and measurements.
	\item For any finite number of additional measurements, in section~\ref{subsec:algo} we provide an algorithm able to confirm that contextuality can be shown using a given set of statistics for the known measurements.
\end{itemize}

\section{Setting and framework}
\label{sec:setting}

\subsection{Operational description of experiments}

An  operational theory is defined by indexing all the procedures that an experimenter could (in principle) implement. These can include: a set of
 measurements $\mathcal M$, where each measurement $M \in \mathcal M$ has associated outcomes $\mathcal K_M$;  preparations $\mathcal P$; other transformations.
Note that there is not necessarily a fundamental physical difference between preparations, transformations and measurements; the distinction may just be a practical one.

The underlying causal model behind a prepare-and-measure protocol assumes that the choice of preparation is independent of the choice of measurement, and that the outcome may depend on both,
\begin{align*}
     \left.\begin{aligned}
        \mathcal P  \\
        \mathcal M 
       \end{aligned}
 \right\}
 \xrightarrow{ \  \ }
 \mathcal K_M.
\end{align*}

\subsubsection{Probabilistic descriptions}

In order to make probabilistic statements, agents must assume that they can implement the same procedure independently many times. In this case we  lift the observed measurement frequencies to probability distributions $P^e(k|P,M)$,\footnote{$P^e$ stands for experiment/empirical probabilities.} as $\mathcal K_M$ becomes associated with a random variable $K_M$,
\begin{align*}
     \left.\begin{aligned}
        \mathcal P  \\
        \mathcal M 
       \end{aligned}
 \right\}
 \xrightarrow{ \ P^e(k|P, M) \ }
 \mathcal K_M.
\end{align*}
For the purpose of this work it is also helpful to consider settings where the experimenter may have access to an external source of randomness that enables arbitrary probability distributions $Q(P)$ to be used by the experimenter to choose a preparation.
\begin{align*}
     \left.\begin{aligned}
      \star \xrightarrow{\quad Q(P) \quad } \mathcal P  \\
        \mathcal M 
       \end{aligned}
 \right\}
 \xrightarrow{ \ P^e(k|P, M) \ }
 \mathcal K_M,
\end{align*}
so we have a convex combination of the original $P^e$,
$$P^{e}(k|P_Q,M)=  \sum_P Q(P)\  P^e(k|P,M)  .$$

\subsubsection{Operational equivalence}

We say that two preparations are operationally equivalent if we cannot distinguish them through the measurements specified by the experiment. Here we will be particularly interested in distributions $Q(P)$ and $Q'(P)$ leading to operationally equivalent preparations $P_Q$ and $P_{Q'}$:
\begin{align}
	P^{e}(k|P_Q,M) &=  P^{e}(k|P_{Q'},M) \
    \Leftrightarrow \nonumber \\ 
    \sum_P Q(P)\ P^e(k|P,M) &= \sum_P Q'(P)\ P^e(k|P,M),
    \label{eq:operational_equivalence}
\end{align}
for all $M\in \mathcal M$ and $k \in \mathcal K_M$.

\subsection{Tomographic completeness}

The notion of a \emph{tomographically complete set of measurements} is also an operational one, and relative  to a set of preparations $\mathcal P$. The idea is that there exists a subset of measurements $\mathcal M_C \subset \mathcal M $  which determine the statistics of all the other measurements in $\cM$. That is, for any measurement $M \in \mathcal M$, there exists a deterministic function $f_M$ satisfying
\begin{align*}
P^e(k|P,M)= f_M( k, \{ P^e(k'|P, M')\}_{k' \in \mathcal K_{M'}, M' \in \mathcal M_C}  ),
\end{align*}
for all outcomes $ k \in \mathcal K_M$ and preparations $P \in \mathcal P$. To ensure the right behaviour under convex mixtures, $f_M(k|\cdot)$ must be a linear function \cite{hardy2001,barrett2007}.\footnote{Analogously, given a fixed set of measurements $\mathcal M$,  an operational theory may have a notion of \emph{tomographically complete set of preparations}. Since this is only needed to find equivalent measurements (as for ``measurement noncontextuality''), we will not need to refer to this concept explicitly.}

For finite fixed sets of preparations, claims that a set of measurements is complete can be experimentally falsified by finding a measurement for which no such $f$ exists.

\subsection{Geometric representation of measurement statistics}

\begin{figure}[t]
    \centering
    {\bf a.}
    \includegraphics[width=0.35\textwidth]{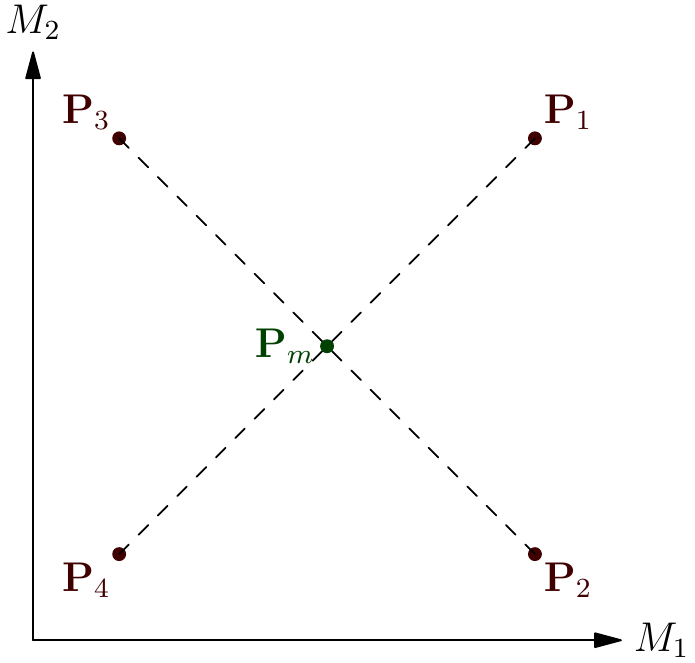}
    \quad
    {\bf b.}
    \includegraphics[width=0.4\textwidth]{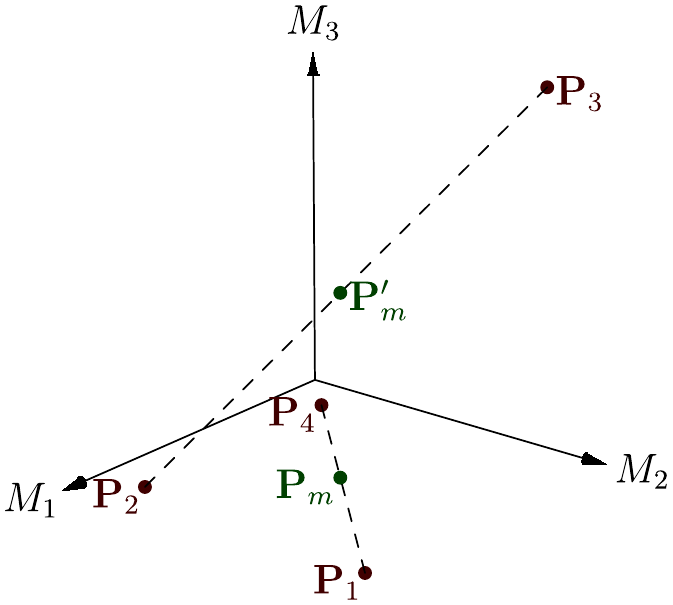}
    \caption{{\bf Operational equivalences.} 
    We represent the measurement statistics of $n$ preparations and $m$ binary measurements  as $n$  points in a $m$-dimensional vector space (Eq.~\ref{eq:statistics_vectors}).   {\bf a.}
    Four preparations and two measurements: the convex hulls of $\{\vec P_1, \vec P_4\}$ and $\{\vec P_2, \vec P_3\}$ (dashed lines) intersect. The point of intersection $\vec P_m$ corresponds to an operational equivalence. {\bf b.} The same preparations, one extra measurement: the operational equivalence was lifted. In other words, measurement $M_3$ allows us to distinguish the two mixtures $\vec P_m$ and $\vec P'_m$.}
    \label{fig:operational_equivalence}
\end{figure}

We consider a setting where $\mathcal P$ consists of $n$ preparations, and $m$ binary measurements form a tomographically complete set.
 The measurement statistics of each preparation $P_i$ may be fully described by the $m$-dimensional vector \cite{hardy2001,barrett2007}
\begin{align}
    \vec P_i = 
\begin{pmatrix}
P^e(0|P_i ,M_1)\\ P^e(0|P_i, M_2) \\ \vdots \\ P^e(0|P_i, M_d)
\end{pmatrix} \in [0,1]^{\otimes m}.
\label{eq:statistics_vectors}
\end{align}
(Since we are considering binary measurements, $P^{e}(1|P, M_j) = 1-P^{e}(0|P, M_j)$, so we only need the probability of outcome 0 to define the space.)
Let us call  the set of all such vectors $P_{\cM}:= \{ \vec P_i\}_i \subset [0,1]^{\otimes m}$. 
Adding a new unknown measurement to $\cM_C$ corresponds to increasing the dimension of the vector space where $P_{\cM}$ lives by one.

The measurement statistics of any convex combination of preparations will lie in the convex hull of $P_{\cM}$, as 
$$  
\begin{pmatrix}
P^{e}(0|P_Q ,M_1)\\ P^{e}(0|P_Q, M_2) \\ \vdots \\ P^{e}(0|P_Q, M_d)
\end{pmatrix} 
= 
\begin{pmatrix}
 \sum_i Q(P_i)\  P^e(k|P_i,M_1) \\ 
 \sum_i Q(P_i)\  P^e(k|P_i,M_2) \\ 
 \vdots \\ 
 \sum_i Q(P_i)\  P^e(k|P_i,M_d)
\end{pmatrix} 
= \sum_i Q(P_i)\ \vec P_i \in \con {P_{\cM}},
$$
where $\con S$ denotes the convex hull of a set $S $ of vectors. 
In particular, tomographic completeness ensures that operational equivalence of two preparation mixtures $Q$ and $Q'$ can be expressed as 
\begin{align}
    \sum_i Q(P_i)\ \vec P_i
    = \sum_i Q'(P_i)\ \vec P_i.
    \label{eq:operational_equivalence_geometric}
\end{align}
This corresponds to finding two subsets of $P_{\cM}$ whose convex hulls intersect, as shown in Fig.~\ref{fig:operational_equivalence}a.

\subsection{Ontological models}

Ontological models form a simple framework for describing physical systems underlying an operational description. In particular, it is assumed that between preparation and measurement the system has a complete description $\lambda$, called the \emph{ontic state}. It is further assumed that the dependence of the measurement outcome on the preparation procedure is mediated by $\lambda$. Hence we consider the causal structure
\begin{align*}
     \left.\begin{aligned}
        \mathcal P  \xrightarrow{ \quad  } \Lambda\\
        \mathcal M 
       \end{aligned}
 \right\}
 \xrightarrow{ \quad }
 \mathcal K_M.
\end{align*}

Ontological models may be non-deterministic: for example, the mapping between descriptions and elements of the underlying theory could be done by probabilistic maps or non-probabilistic embeddings.  
In this work we will consider probabilistic models, of the form
\begin{align*}
     \left.\begin{aligned}
        \mathcal P  \xrightarrow{ \quad \mu(\lambda|P) \quad } \Lambda\\
        \mathcal M 
       \end{aligned}
 \right\}
 \xrightarrow{ \quad P^t(k|\lambda, M) \quad }
 \mathcal K_M,
\end{align*}
so that the observed statistics can be decomposed as
\begin{align*}
    P^e(k|P,M) = \int_\Lambda d\lambda \ \mu(\lambda|P)   \ 
    P^t(k|\lambda, M).
\end{align*}

When the preparation is chosen according to some $Q(P)$ we obtain the model 
\begin{align*}
     \left.\begin{aligned}
        \star \xrightarrow{ \ Q(P) \quad }   \mathcal P  \xrightarrow{ \ \mu(\lambda|P) \quad } \Lambda\\
        \mathcal M 
       \end{aligned}
 \right\}
 \xrightarrow{ \ P^t(k|\lambda, M) \quad }
 \mathcal K_M,
\end{align*}
which results in 
\begin{align*}
    P^{e} (k|P_Q,M)
    &=  \sum_P Q(P)\  P^e(k|P,M)  \\
    &=  \sum_P Q(P)\  
    \int_\Lambda d\lambda \ \mu(\lambda|P)   \ 
    P^t(k|\lambda, M).
\end{align*}

\subsection{Noncontextuality}

The assumption of \emph{noncontextuality} \cite{Spekkens2005} can be summarised as ``operational equivalence implies ontological equivalence.'' Here we focus on \emph{preparation noncontextuality}, which applies this to the case of operationally equivalent preparations.
We start from operational equivalence between two distributions $Q$ and $Q'$, 
\begin{align*}
  P^{e}(k|P_Q,M) &=  P^{e}(k|P_{Q'},M) \
    \Leftrightarrow \\ 
    \Leftrightarrow \
    \sum_P Q(P)\ P^e(k|P,M) &= \sum_P Q'(P)\ P^e(k|P,M) 
    \  \Leftrightarrow \\ 
    \Leftrightarrow \
    \sum_P Q(P)\  
    \int_\Lambda d\lambda \ \mu(\lambda|P)   \ 
    P^t(k|\lambda, M)
    &= \sum_P Q'(P)\  
    \int_\Lambda d\lambda \ \mu(\lambda|P)   \ 
    P^t(k|\lambda, M),
\end{align*}
for all $M\in \mathcal M$ and $k \in \mathcal K_M$. 
Noncontextuality is the assumption that this implies 
\begin{align}
    \sum_P Q(P)\  
     \mu(\lambda|P)   
    &= \sum_P Q'(P)\  
    \mu(\lambda|P) ,
    \label{eq:ontological_equivalence}
\end{align}
for all $\lambda \in \Lambda$. Notice that \eqref{eq:ontological_equivalence} always implies the operational equivalence. The justification for noncontextuality is that ontological equivalence is the best explanation for operational equivalence: the preparations cannot be distinguished because the resulting systems have identical properties.

\section{Results}
\label{sec:result}
\subsection{Robust proofs of contextuality need additional preparations}\label{subsec:needadditionalprep}
Before constructing proofs of contextuality that work in the face of unknown measurements, we first show that a price must be paid in terms of the number of (known) preparations.
\begin{theorem} Consider a setup with $n$ preparations and $m$ known measurements, such that if the known measurements are tomographically complete the scenario does not admit a noncontextual model. Suppose all $n$ preparations are crucial to the contextuality: any $n-1$ of them alone admit a noncontextual model. If the true tomographically complete set in fact includes an additional binary measurement $M_*$, then there exists an assignment of probabilities to $M_*$ allowing a noncontextual model for all $n$ preparations.
\end{theorem}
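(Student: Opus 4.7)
The plan is to construct explicitly both a choice of probabilities $p_i := P^e(0|P_i, M_*) \in [0,1]$ and a noncontextual ontological model for the resulting extended $(n, m+1)$-scenario. The key observation is that every operational equivalence is a difference $r = Q - Q'$ of probability distributions, so every $r$ in the equivalence space $V_0$ of the original scenario automatically satisfies the normalization $\sum_i r_i = 0$. I will exploit this by choosing $p$ to be the $j$-th standard basis vector (for any fixed $j \in \{1,\ldots,n\}$): setting $p_j = 1$ and $p_i = 0$ for $i \neq j$, the extra linear constraint imposed by $M_*$ becomes $\sum_i r_i p_i = r_j = 0$, so the extended equivalence space is
\[
V_1 \;=\; V_0 \cap \{r : r_j = 0\},
\]
which is precisely the equivalence space of the $(n-1)$-preparation subscenario obtained by discarding $P_j$.

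Next I would invoke the ``crucial'' hypothesis: this subscenario admits a noncontextual ontological model $\mathcal{N}_j = (\Lambda_j, \mu_j, P^t_j)$. I will extend $\mathcal{N}_j$ to the full scenario by adjoining a single fresh ontic state $\lambda_*$ reserved for $P_j$. Concretely, set $\mu(\lambda_*|P_j) = 1$ and $\mu(\lambda_*|P_i) = 0$ for $i \neq j$; retain $\mu_j(\lambda|P_i)$ unchanged on $\Lambda_j$; on each known measurement $M$ keep $P^t(k|\lambda, M) = P^t_j(k|\lambda, M)$ on $\Lambda_j$ and set $P^t(k|\lambda_*, M) = P^e(k|P_j, M)$; on $M_*$ set $P^t(0|\lambda, M_*) = 0$ for $\lambda \in \Lambda_j$ and $P^t(0|\lambda_*, M_*) = 1$.

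Finally I would verify (i) that this model reproduces the operational statistics of every $(P_i, M)$ pair, including the prescribed $p_i$, and (ii) that preparation noncontextuality holds: for each $r \in V_1$ the condition $\sum_i r_i \mu(\lambda|P_i) = 0$ at $\lambda = \lambda_*$ is immediate from $r_j = 0$ combined with $\mu(\lambda_*|P_i) = 0$ for $i \neq j$, while at $\lambda \in \Lambda_j$ it reduces to the noncontextuality of $\mathcal{N}_j$ (since $r_j = 0$ means $r$ lives in the equivalence space of the subscenario). Both checks are mechanical. The principal conceptual hurdle --- and what I would regard as the only non-routine step --- is spotting the right choice of $p$: the fact that $\sum_i r_i = 0$ for every operational equivalence allows a single binary measurement, with outputs $0$ and $1$, to single out one preparation's coordinate, thereby collapsing the extended equivalence space onto an already-noncontextual subscenario that one can then pad back up to $n$ preparations using a single dedicated ontic state.
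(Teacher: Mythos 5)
Your proposal is correct and is essentially the paper's own proof: the paper likewise assigns $M_*$ a deterministic outcome that singles out one preparation (there, $P_n$), extends the noncontextual model of the remaining $n-1$ preparations with a single dedicated ontic state $\lambda_*$, and uses the constraint $Q(P_n)=Q'(P_n)$ from $M_*$ to reduce any operational equivalence to one in the subscenario. The only differences are cosmetic (working with the difference vector $r=Q-Q'$ and relabelling the distinguished preparation/outcome).
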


\begin{proof} The assignment is as follows: $M_*$ always returns the first outcome for $n-1$ of the preparations and always returns the second outcome for the final preparation.
  
We can now construct a noncontextual model as follows. By assumption there exists a noncontextual model with ontic state space $\tilde \Lambda$ for the original measurements on the $n-1$ preparations. We can then specify that $M_*$ gives the first outcome for any ontic state in that model:
\begin{equation}
  P^t(k|\lambda, M_*) = \delta_{k0} \qquad \forall \lambda \in \tilde\Lambda.
\end{equation}
This model therefore reproduces the correct operational predictions for any of the measurements, including $M_*$, on the first $n-1$ preparations. Next, supplement this model with an additional ontic state $\lambda_*$ that is prepared with certainty by the final preparation $P_n$, i.e. $\mu(\lambda_*|P_n) = 1$. Set
\begin{equation}
  P^t(k|\lambda_*, M) = P^e(k|P_n, M),
\end{equation}
so that the operational probabilities for $P_n$ are trivially reproduced.

Now we argue that this extended model, with state space $\Lambda = \tilde \Lambda \cup \{ \lambda_* \}$, is noncontextual. Suppose two distributions $Q$ and $Q'$ give rise to operationally equivalent mixtures: $\sum_i Q(P_i) \vec P_i = \sum_i Q'(P_i)\vec P_i$. Then in particular, they are equivalent on the second outcome of $M_*$:
\begin{equation}
  \sum_i Q(P_i) P^e(1|P_i,M_*) = \sum_i Q'(P_i) P^e(1|P_i,M_*),
\end{equation}
and recalling that by construction $P^e(1|P_i,M_*) = \delta_{in}$ this gives $Q(P_n) = Q'(P_n) := q$. Except in the trivial case $Q = Q'$, we must have $q < 1$. Subtracting $q\vec P_i$ from each side of $\sum_i Q(P_i) \vec P_i = \sum_i Q'(P_i)P_i$ we find
\begin{equation}
  \sum_{i=1}^{n-1} Q(P_i) \vec P_i = \sum_{i=1}^{n-1} Q'(P_i)\vec P_i,
\end{equation}
and dividing through by $1 - q$ gives a normalized operational equivalence. Hence we can apply the noncontextuality of the model for the first $n-1$ preparations to conclude that 
\begin{equation}
  \sum_{i=1}^{n-1} Q(P_i) \mu(\lambda|P_i) = \sum_{i=1}^{n-1} Q'(P_i) \mu(\lambda|P_i) \qquad \forall \lambda \in \tilde\Lambda.
\end{equation}
By construction $\mu(\lambda|P_n) = 0$ for all $\lambda \in \tilde\Lambda$ and so we in fact have 
\begin{equation}
  \sum_{i=1}^{n} Q(P_i) \mu(\lambda|P_i) = \sum_{i=1}^{n} Q'(P_i) \mu(\lambda|P_i) \qquad \forall \lambda \in \tilde\Lambda.
\end{equation}
Finally, $\mu(\lambda_*|P_i) = \delta_{in}$, so
\begin{equation}
  \sum_{i=1}^{n} Q(P_i) \mu(\lambda_*|P_i) = Q(P_n) = Q'(P_n) = \sum_{i=1}^{n} Q'(P_i) \mu(\lambda_*|P_i),
\end{equation}
and we have established ontological equivalence for all $\lambda \in \Lambda$.
\end{proof}

Iterating this result means that every time we add an additional unknown measurement to the tomographically complete set, we must add at least one more preparation to have any hope of proving contextuality.

\subsection{Robust proofs of contextuality may need additional measurements}\label{subsec:needadditionalmeas}
We also find there can be a price in terms of known measurements. Specifically, if for some $m$ we want a proof of contextuality that works for a tomographically complete set of size $2^m - 1$ (or greater), then at least $m+1$ known measurements are required.

\begin{theorem}
Consider a setup with $m$ known binary measurements. If the true tomographically complete set contains at least $2^m - 1$ binary measurements, then there exists an assignment to the unknown measurements allowing a noncontextual model.
\end{theorem}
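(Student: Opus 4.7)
My plan is to build an explicit noncontextual model whose ontic state space is the set of deterministic response patterns to the $m$ known measurements, $\Lambda = \{0,1\}^m$, and to exploit the freedom in the unknown measurements to make distinct ontic states distinguishable in a geometrically clean way. Concretely, I will take $P^t(0 | \lambda_v, M_j) = v_j$ for each of the $m$ known measurements $M_j$, and for each preparation $P_i$ I will pick any decomposition $\vec P_i = \sum_v \alpha_{i,v}\, v$ of its known statistics vector as a convex combination of hypercube vertices, defining $\mu(\lambda_v | P_i) := \alpha_{i,v}$. Such a decomposition always exists because $\vec P_i \in [0,1]^m = \con{\{0,1\}^m}$, and it automatically reproduces the known-measurement statistics.

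The heart of the argument is the choice of the response probabilities $P^t(0 | \lambda_v, M_*) \in [0,1]$ for the unknown measurements $M_*$. Assembling all response probabilities into vectors $\vec \Lambda_v \in [0,1]^N$, where $N \geq 2^m - 1$ is the size of the tomographically complete set, I want the $2^m$ vectors $\{\vec \Lambda_v\}_v$ to be affinely independent in $\mathbb{R}^N$. Since $N \geq 2^m - 1$ there is just enough ambient dimension; I will achieve this by observing that the $m$ coordinate functions $v \mapsto v_j$ together with the constant span only an $(m+1)$-dimensional subspace of the $2^m$-dimensional space of real-valued functions on $\{0,1\}^m$, and completing a basis using up to $2^m - 1 - m$ multilinear monomials $v \mapsto \prod_{j \in S} v_j$ with $|S| \geq 2$. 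These monomials are $\{0,1\}$-valued, so they define legitimate deterministic binary response rules, and any further unknown measurements can be set to constant trivial responses. The assigned empirical statistics for the unknown measurements are then defined to be whatever the model predicts, $P^e(k | P_i, M_*) := \sum_v \alpha_{i,v}\, P^t(k | \lambda_v, M_*)$, so consistency is automatic.

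Noncontextuality then follows cleanly. Suppose two mixtures $Q, Q'$ are operationally equivalent on all $N$ measurements; writing $c_v := \sum_i (Q(P_i) - Q'(P_i))\, \alpha_{i,v}$, equivalence on each measurement is exactly $\sum_v c_v \vec \Lambda_v = 0$, while normalisation of the $\alpha_{i,\cdot}$ and of $Q, Q'$ yields $\sum_v c_v = 0$. Affine independence of the $\vec \Lambda_v$ then forces $c_v = 0$ for every $v$, which is the ontological equivalence $\sum_i Q(P_i) \mu(\lambda_v | P_i) = \sum_i Q'(P_i) \mu(\lambda_v | P_i)$ required by preparation noncontextuality.

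The main obstacle I anticipate is the $[0,1]$ bound on probabilities: the extra response functions must simultaneously achieve affine independence and stay inside $[0,1]$. The multilinear-monomial choice above handles this essentially for free because it is Boolean-valued, so the only quantitative constraint is $2^m - 1 - m \leq N - m$, which is supplied directly by the hypothesis. Iterating the previous theorem of Sec.~\ref{subsec:needadditionalprep} is tempting but seems to fail, because once a single unknown measurement has been added the scenario is already noncontextual and that theorem's hypotheses no longer hold; the direct simplex-based construction above is what exactly matches the bound $2^m - 1$, which is also precisely the affine dimension needed to host $2^m$ affinely independent points.
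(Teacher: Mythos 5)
Your proposal is correct and takes essentially the same route as the paper: both build the ontological model on the $2^m$ deterministic assignments to the known measurements and then choose the $2^m-1$ measurements of the tomographically complete set so that their statistics, together with normalisation, uniquely determine the distribution over ontic states, which immediately yields preparation noncontextuality. The only differences are cosmetic --- the paper takes the unknown measurements to be the indicators ``is the ontic state $\lambda$?'' (so that $P^e(0|P,M_\lambda)=\mu(\lambda|P)$ directly) and uses the product distribution $\mu(\lambda|P)=\prod_M P^e(\lambda(M)|P,M)$, whereas you reach the same invertibility via the multilinear-monomial basis and an arbitrary convex decomposition of each $\vec P_i$ into hypercube vertices.
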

\begin{proof}
Consider the trivial ontological model with $2^m$ ontic states where each ontic state is a deterministic assignment to measurement outcomes (i.e. a function from measurements to outcomes), and the distribution over the ontic states is just the product of the $P^e$ for each measurement:
\begin{equation}
  \mu(\lambda|P) = \prod_M P^e(\lambda(M)|P,M).
\end{equation}

Consider the $2^m-1$ (possibly unknown) measurements $\{M_\lambda\}$ that ask ``is the ontic state $\lambda$ or not?'' for the first $2^m-1$ ontic states:
\begin{equation}
  P^e(0|P,M_{\lambda}) = \mu(\lambda|P).
\end{equation}
Notice that the $m$ known measurements are linear functions of the statistics of the $\{M_\lambda\}$, because they simply ask which of two subsets the ontic state is in. Hence we can take the $2^m-1$ measurements $\{M_\lambda\}$ to be our tomographically complete set, and then the model is clearly preparation noncontextual because the operational statistics of these measurements (plus normalisation) uniquely determine the distribution over ontic states.
\end{proof}

\subsection{General result: no finite number of additional measurements allows a noncontextual model of a qubit}\label{subsec:nofinite}
We now turn to the main problem: constructing a proof of contextuality that allows for unknown measurements in the tomographically complete set. First note that by the results of Ref.~\cite{Schmid2017}, if there exists a noncontextual model for a finite number of preparations and measurements then there exists a model with a finite number of ontic states. Hence the following reformulation of the problem will be useful.
\begin{lemma}\label{lemma:dimensionformulation}
Suppose $\Lambda$ is finite. Associate each preparation $P_i$ with a vector $\vec \mu_i = (\mu(\lambda_1 | P_i), \mu(\lambda_2 | P_i), \dotsc)$. Let $k$ be the affine dimension of $\{\vec \mu_i\}$. Then there must be at least $k$ measurements in a tomographically complete set in order for the model to be preparation noncontextual.
\end{lemma}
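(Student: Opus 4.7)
The plan is to encode the noncontextuality hypothesis as an inclusion of linear subspaces and then finish by a dimension count. First I would observe that, with $\Lambda$ finite, every operational equivalence \eqref{eq:operational_equivalence_geometric} between two mixtures $Q$ and $Q'$ is encoded by the vector $c \in \mathbb{R}^n$ with entries $c_i := Q(P_i) - Q'(P_i)$, which satisfies both $\sum_i c_i = 0$ (normalisation) and $\sum_i c_i \vec P_i = 0$. Noncontextuality \eqref{eq:ontological_equivalence}, applied to this pair of distributions, then reads $\sum_i c_i \vec \mu_i = 0$.

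Next I would show that every nonzero $c$ with $\sum_i c_i = 0$ and $\sum_i c_i \vec P_i = 0$ arises, up to a positive rescaling, as $Q - Q'$ for genuine probability distributions $Q, Q'$. The trick is a Jordan-style decomposition $c = c^+ - c^-$ with $c^\pm_i := \max(\pm c_i, 0) \geq 0$; the constraint $\sum_i c_i = 0$ forces a common total $S := \sum_i c^+_i = \sum_i c^-_i$, which is strictly positive since $c \neq 0$. Then $Q(P_i) := c^+_i / S$ and $Q'(P_i) := c^-_i / S$ are probability distributions with $Q - Q' = c / S$, so the preceding paragraph's implication yields $\sum_i c_i \vec \mu_i = 0$ for \emph{every} such $c$.

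It remains to count dimensions. Let $A := \{ c \in \mathbb{R}^n : \sum_i c_i = 0,\ \sum_i c_i \vec P_i = 0 \}$ and $B := \{ c \in \mathbb{R}^n : \sum_i c_i = 0,\ \sum_i c_i \vec \mu_i = 0 \}$. By elementary linear algebra, their codimensions inside the hyperplane $\sum_i c_i = 0$ equal the affine dimensions of $\{\vec P_i\}$ and $\{\vec \mu_i\}$, respectively; writing $a$ for the affine dimension of $\{\vec P_i\}$, this gives $\dim A = n - 1 - a$ and $\dim B = n - 1 - k$. The previous two paragraphs established $A \subseteq B$, hence $a \geq k$. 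But if the tomographically complete set consists of $m$ binary measurements then $\{\vec P_i\} \subseteq [0,1]^m$, so $a \leq m$, and therefore $m \geq k$ as claimed.

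The one non-routine step is the second: noncontextuality is hypothesised only for honest probability distributions $Q, Q'$, and without the positive/negative decomposition one could not transfer every formal affine dependence among the statistics vectors $\vec P_i$ to the ontological vectors $\vec \mu_i$. The remainder is elementary linear algebra of the kind already used in Section~\ref{sec:setting}.
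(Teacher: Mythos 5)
Your proof is correct and is essentially the paper's own argument in dual form: the paper shows that noncontextuality forces the linear map from ontic distributions to measurement statistics to be injective on the affine hull of $\{\vec \mu_i\}$, so that the codomain needs dimension at least $k$, while you establish the equivalent statement that the space of affine dependencies of $\{\vec P_i\}$ is contained in that of $\{\vec \mu_i\}$ and conclude by rank--nullity. Your Jordan-decomposition step (needed because noncontextuality is only hypothesised for honest probability distributions) is the counterpart of the paper's observation that every point of the convex hull is actually preparable, and your rendering is if anything the more rigorous of the two.
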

\begin{proof}
By preparing the corresponding convex combinations of the $\{P_i\}$, we can prepare any distribution over ontic states in the convex hull of the $\{\vec \mu_i\}$. By preparation noncontextuality, each member of the convex hull must correspond to operationally inequivalent preparations, i.e. give different predictions for at least one measurement in the tomographically complete set. Since mapping from $\vec \mu_i$ to measurement probabilities is linear, each measurement can only distinguish the vectors in one direction, and so we need at least $k$ measurements. If one of those measurements is a linear combination of the others, it won't provide a new direction, so the $k$ measurements must be linearly independent. Hence the size of a tomographically complete set of measurements, which spans the set of measurements, must be at least $k$.
\end{proof}

We can obtain one bound using the following rather trivial lemma.
\begin{lemma}\label{lemma:binarymatrixrank}
Let $M$ be a $(0,1)$-valued matrix with rank $k$. Then the number of distinct rows in $M$ is at most $2^k$.\end{lemma}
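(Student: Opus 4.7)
The plan is to exploit the fact that a $(0,1)$-matrix of rank $k$ has a column basis of size $k$, and then observe that restricting each row to those $k$ columns is injective on the set of distinct rows. Since each such restriction lives in $\{0,1\}^k$, the number of distinct rows is at most $2^k$.

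Concretely, I would first note that rank $k$ means the column space has dimension $k$ (over $\mathbb{R}$, or equivalently $\mathbb{Q}$), so one can choose $k$ columns $c_{j_1}, \dotsc, c_{j_k}$ of $M$ that form a basis for the column space. Every other column $c_j$ is then a linear combination $c_j = \sum_{i=1}^k \alpha_i^{(j)} c_{j_i}$ for some real coefficients $\alpha_i^{(j)}$ (which may or may not be integers — but this does not matter for the argument).

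Next I would argue that two rows of $M$ that agree on the $k$ positions $j_1, \dotsc, j_k$ must in fact agree on every column. Indeed, the value of row $r$ in column $j$ is
\begin{equation*}
  M_{r, j} = \sum_{i=1}^k \alpha_i^{(j)} M_{r, j_i},
\end{equation*}
which depends on row $r$ only through its entries at columns $j_1, \dotsc, j_k$. So the map sending a row of $M$ to its restriction to $\{j_1, \dotsc, j_k\}$ is injective on the set of distinct rows of $M$.

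Finally, the image of this restriction lies in $\{0,1\}^k$, which has exactly $2^k$ elements, so there are at most $2^k$ distinct rows. There is no real obstacle here; the only thing to keep track of is that ``rank $k$'' refers to the real (or rational) rank, not some binary rank, and that the injectivity argument above only uses linearity of the column dependencies, not any property of the coefficients $\alpha_i^{(j)}$.
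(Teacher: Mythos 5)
Your proof is correct and is essentially identical to the paper's: both select $k$ spanning columns, observe that rows agreeing on those columns must agree everywhere because the remaining columns are linear combinations of them, and conclude there are at most $2^k$ distinct rows. Your write-up just spells out the injectivity argument (and the irrelevance of the coefficients being non-integer) a little more explicitly.
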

\begin{proof}
Take $k$ columns that span the columns of $M$. In each row there are $2^k$ possible entries in those columns. Take two rows that have the same entries in those columns. Since the other columns are just linear combinations of the spanning columns, they will also have the same entries in those two rows.\end{proof}

\begin{theorem}For any $k \in \mathbb{N}$ there exists $2^k$ preparations and measurements, with statistics compatible with a qubit model, that would require $k$ measurements in a tomographically complete set for a preparation noncontextual model.
\label{theorem:baggy}\end{theorem}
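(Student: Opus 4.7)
To prove Theorem~\ref{theorem:baggy}, I would pick $2^k$ pairwise distinct qubit binary POVMs together with $2^k$ qubit preparations whose Bloch vectors include three linearly independent directions, so that any two distinct POVMs give distinct statistics on the preparations. The resulting $2^k \times 2^k$ empirical matrix $S$ is qubit-compatible by construction and has $2^k$ pairwise distinct columns.

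\medskip

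For the lower bound I would fix any preparation noncontextual model and pass, without loss of generality, to a refinement with deterministic response functions (by the standard construction of labelling ontic states with outcome-assignment functions $f$ and setting the response to $\delta_{f(M),k}$; preparation noncontextuality is preserved because the $P$-dependence of the refined distribution factors through the unchanged $\mu(\lambda|P)$). In the refined model the response matrix $R_{M,\lambda} = P^t(0|\lambda,M)$ is $(0,1)$-valued, and two measurements with identical rows of $R$ would produce identical statistics on every preparation. So the construction forces the $2^k$ rows of $R$ indexed by the known measurements to be pairwise distinct, and Lemma~\ref{lemma:binarymatrixrank} yields that this submatrix has rank at least $k$.

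\medskip

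Tomographic completeness then closes the argument: it says each known measurement's statistics is an affine function of the $\mathcal M_C$-statistics across preparations, and after pruning to the ontic states actually realised by the preparations this lifts pointwise in $\lambda$ to the response functions themselves, so each row of the submatrix for known measurements lies in the affine span of the $|\mathcal M_C|$ rows indexed by $\mathcal M_C$. Hence this rank is at most $|\mathcal M_C|$, and combining with the previous step gives $|\mathcal M_C| \ge k$. One can alternatively frame this final step through Lemma~\ref{lemma:dimensionformulation} by observing that after the pruning the affine dimension of $\{\vec\mu_i\}$ equals this same rank.

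\medskip

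The main obstacle is precisely this lifting of tomographic completeness from a statistical identity on the finite set $\mathcal P$ to a pointwise-in-$\lambda$ identity on the response functions. The combinatorial core---getting $\operatorname{rank} \ge k$ from $2^k$ distinct $(0,1)$-rows via Lemma~\ref{lemma:binarymatrixrank}---is essentially immediate; the technical care required to ensure the resulting rank really bounds $|\mathcal M_C|$ is the delicate piece and is what necessitates the refinement and pruning steps.
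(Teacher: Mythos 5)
There is a genuine gap, and it sits exactly where you flagged it: the ``lifting'' of tomographic completeness from a statistical identity to a pointwise-in-$\lambda$ identity on the response functions does not hold. Tomographic completeness says that for every known $M$ there are affine coefficients with $P^e(0|P,M)=\sum_{M'\in\mathcal M_C}c_{M'}P^e(0|P,M')+c_0$ for all $P\in\mathcal P$ (and mixtures thereof); in the ontological model this only says that the vector $R_M-\sum_{M'}c_{M'}R_{M'}-c_0\mathbf 1$ is orthogonal to the linear span of the realised distributions $\{\vec\mu_i\}$, not that it vanishes. Pruning to the support of the $\vec\mu_i$ does not repair this: even on that support the identity holds only when integrated against each $\mu_i$, not state by state. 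A toy example: two ontic states with every $\mu_i=(1/2,1/2)$, and responses $R_{M_1}=(1,0)$, $R_{M_2}=(0,1)$ — the response submatrix has rank $2$ while a single measurement (indeed zero) suffices for $\mathcal M_C$. The same example refutes your alternative framing that ``the affine dimension of $\{\vec\mu_i\}$ equals this same rank'': the rank of the response matrix measures how many directions the measurements \emph{could} distinguish, whereas Lemma~\ref{lemma:dimensionformulation} needs the number of directions the preparations \emph{actually explore}, and neither quantity bounds the other.

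The missing idea is a lower bound on the affine dimension of $\{\vec\mu_i\}$ itself, and your choice of scenario gives you no handle on it: with generic distinct POVMs and preparations spanning the Bloch ball, nothing prevents all $2^k$ of the $\vec\mu_i$ from lying in a low-dimensional affine subspace (deriving a contradiction from that is essentially the theorem to be proved). The paper instead takes the measurements to be the projections onto the $2^k$ preparation states, so that $P^e(0|P_i,M_i)=1$ forces $\vec\mu_i$ to be supported where $P^t(0|\lambda,M_i)=1$, while $P^e(0|P_j,M_i)<1$ forces each $\vec\mu_j$, $j\neq i$, to put weight outside that set. The indicator vectors $\vec v_i$ of these support sets are then pairwise distinct $(0,1)$-vectors, Lemma~\ref{lemma:binarymatrixrank} gives them rank at least $k$, and since each $\vec v_i$ is a separating functional for the $\{\vec\mu_i\}$ this yields affine dimension at least $k$, at which point Lemma~\ref{lemma:dimensionformulation} — where preparation noncontextuality is actually used, to force the $\mathcal M_C$-statistics to be injective on the convex hull of the $\{\vec\mu_i\}$ — finishes the proof. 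Your combinatorial core (Lemma~\ref{lemma:binarymatrixrank} applied to distinct rows) is the right tool, but it must be applied to the supports of the $\vec\mu_i$, not to the rows of the response matrix.
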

\begin{proof}
We use essentially the same properties of a qubit as \cite{Hardy2004}. Take $n=2^k$ non-orthogonal pure states as the preparations. Let the measurements be the projections onto those states. For each preparation $P_i$ (with corresponding measurement $M_i$), form a vector of length $\abs{\Lambda}$ whose components $v_j$ are $1$ where $P^t(0|\lambda_j, M_i) = 1$ and $0$ otherwise. $\vec \mu_i$ must be zero wherever $\vec v$ is zero, because projection onto the same state is guaranteed to succeed. Meanwhile the $\vec \mu_{i'}$ with $i' \neq i$ must have some non-zero components where $\vec v$ is zero, because projection onto a different state fails some of the time. Hence the $\{\vec \mu_i\}$ vary in the direction of $\vec v$.

Now consider a new value of $i$, and form the corresponding $\vec v$. This must be distinct from the previous one, because it must be $1$ wherever $\vec \mu_i$ is non-zero, whereas, as argued above, the previous one had at least zero component where $\mu_i$ is non-zero.

Taking the $\vec v$ for all values of $i$ as rows of a matrix $M$ gives a $2^k \times \abs{\Lambda}$ $(0,1)$-valued matrix where every row is distinct. Hence by Lemma~\ref{lemma:binarymatrixrank} the rank of $M$ is at least $k$. So the $\{\vec \mu_i\}$ vary in at least $k$ linearly independent directions, i.e. have an affine dimension at least $k$ and so Lemma~\ref{lemma:dimensionformulation} applies.
\end{proof}

Notice that the dependence of known preparations and measurements on unknown measurements is exponential. The argument stated above requires certain probabilities (the projection of a state onto itself) to be exactly $1$, but in Appendix~\ref{appendix:noisybaggy} we give a version that works provided probability of the projection onto a state failing is $\epsilon < \frac{1}{4}\eta^2$, where $\eta$ is the smallest probability of a projection onto a different state failing (this will shrink with $k$ because some pairs of states will be close together, and it may be smaller than the quantum prediction due to noise).

\subsection{Example: a more economical proof for the simplest case}\label{subsec:simplestcase}

As argued in Ref.~\cite{Pusey2015}, the simplest scenario for a standard proof of preparation contextuality is when two binary measurements form a tomographically complete set, and there are four preparations. Hence we now consider adding one unknown measurement to this scenario. Our aim is then to find a proof of contextuality that is still valid regardless of the statistics of the unknown measurement. Since all four preparations are essential, the result of Section~\ref{subsec:needadditionalprep} tells us we will need to consider an additional (known!) preparation. Similarly, the result of Section~\ref{subsec:needadditionalmeas} tells us that two known binary measurements requires at most $2^2 - 1 = 3$ binary measurements in the tomographically complete set to guarantee a noncontextual model, whereas we want to show at at least 4 are required. Hence we will need an additional known measurement. It will turn out we can make do with an additional known measurement whose statistics are a function of the first two known ones, so this additional known measurement is not required for a tomographically complete set.

In short, we consider here a scenario with five preparations with three binary measurements in the tomographically complete set -- two known and one unknown. We also consider a third known measurement, not part of the tomographically complete set.

The preparations will approximately be pure states evenly spaced around the edge of the rebit Bloch circle, thus forming a pentagon that is approximately regular (Fig.~\ref{fig:pentagon}). By applying a simple geometrical argument, Lemma~\ref{lemma:nocturnal} in Appendix~\ref{appendix:geometry}, we see that, even with the unknown additional measurement, the convex hull of some triple of the states must intersect the convex hull of the remaining pair:
\begin{equation}
    p_a P_a + p_b P_b = p_\alpha P_\alpha + p_\beta P_\beta + p_\gamma P_\gamma. \label{eq:linetriangle}
\end{equation}
One of $(P_\alpha,P_\beta)$, $(P_\beta,P_\gamma)$ or $(P_\gamma,P_\alpha)$ must be adjacent vertices of the pentagon, we take the convention that it is $(P_\alpha,P_\beta)$. We can define the probability $p_{\alpha\beta} =p_\alpha + p_\beta$ and a new preparation $P_{\alpha\beta} = \frac{p_\alpha}{p_{\alpha\beta}}P_\alpha + \frac{p_\beta}{p_{\alpha\beta}}P_\beta$ to obtain the operational equivalence
\begin{equation}
    p_a P_a + p_b P_b = p_{\alpha\beta}P_{\alpha\beta}+ p_\gamma P_\gamma.
\end{equation}

We are now back to four preparations and hence we can apply the inequality from Ref.~\cite{Pusey2015}. If that inequality is violated then there is no noncontextual model even with the additional measurement. However, without knowing the statistics of the additional measurement we don't know the equivalence \eqref{eq:linetriangle} and so we need to check the inequality is violated for all of the possible equivalences of this form.

Define $x_i = P^e(0|P_i,M_0)-P^e(1|P_i,M_0)$ and $y_i = P^e(0|P_i,M_1)-P^e(1|P_i,M_1)$. The noncontextuality inequality in Ref.~\cite{Pusey2015}, subject to $(x_a,y_b)$, $(x_\gamma,y_\gamma)$, $(x_b,y_b)$ and $(x_{\alpha\beta},y_{\alpha\beta})$ being the vertices of a convex quadrilateral in clockwise order \cite[Section III]{Pusey2015}, is:
\begin{equation}
    \det{\begin{pmatrix}
    x_a & y_a & x_a + y_a - 1 & 1 \\
    x_\gamma & y_\gamma & -x_\gamma + y_\gamma + 1 & 1 \\
    x_{\alpha\beta} & y_{\alpha\beta} & x_{\alpha\beta} - y_{\alpha\beta} + 1 & 1 \\
    x_b & y_b & -x_b - y_b - 1 & 1
    \end{pmatrix}} \leq 0.
    \label{eq:simpleineq}
\end{equation}

By the definition of $P_{\alpha\beta}$, $x_{\alpha\beta} = \frac{p_\alpha}{p_{\alpha\beta}}x_\alpha + \frac{p_\beta}{p_{\alpha\beta}}x_\beta$ and $y_{\alpha\beta} = \frac{p_\alpha}{p_{\alpha\beta}}y_\alpha + \frac{p_\beta}{p_{\alpha\beta}}y_\beta$. And since determinants are linear in each row, we have
\begin{multline}
    \det{\begin{pmatrix}
    x_a & y_a & x_a + y_a - 1 & 1 \\
    x_\gamma & y_\gamma & -x_\gamma + y_\gamma + 1 & 1 \\
    x_{\alpha\beta} & y_{\alpha\beta} & x_{\alpha\beta} - y_{\alpha\beta} + 1 & 1 \\
    x_b & y_b & -x_b - y_b - 1 & 1
    \end{pmatrix}} 
    =\\
    \frac{p_\alpha}{p_{\alpha\beta}}\det{\begin{pmatrix}
    x_a & y_a & x_a + y_a - 1 & 1 \\
    x_\gamma & y_\gamma & -x_\gamma + y_\gamma + 1 & 1 \\
    x_{\alpha} & y_{\alpha} & x_{\alpha} - y_{\alpha} + 1 & 1 \\
    x_b & y_b & -x_b - y_b - 1 & 1
    \end{pmatrix}} 
     + \frac{p_\beta}{p_{\alpha\beta}}\det{\begin{pmatrix}
    x_a & y_a & x_a + y_a - 1 & 1 \\
    x_\gamma & y_\gamma & -x_\gamma + y_\gamma + 1 & 1 \\
    x_{\beta} & y_{\beta} & x_{\beta} - y_{\beta} + 1 & 1 \\
    x_b & y_b & -x_b - y_b - 1 & 1
    \end{pmatrix}}.
\end{multline}
Hence it suffices to check for a violation of \eqref{eq:simpleineq} in the two extreme cases $P_{\alpha\beta} = P_\alpha$ and $P_{\alpha\beta} = P_\beta$, i.e. we need
\begin{equation} V_i :=
\det{\begin{pmatrix}
    x_a & y_a & x_a + y_a - 1 & 1 \\
    x_\gamma & y_\gamma & -x_\gamma + y_\gamma + 1 & 1 \\
    x_i & y_i & x_i - y_i + 1 & 1 \\
    x_b & y_b & -x_b - y_b - 1 & 1
    \end{pmatrix}} > 0 \label{eq:videf}
    \end{equation}
    for $i = \alpha$ and $i = \beta$.
    
\begin{figure}
    \begin{center}
        \includegraphics{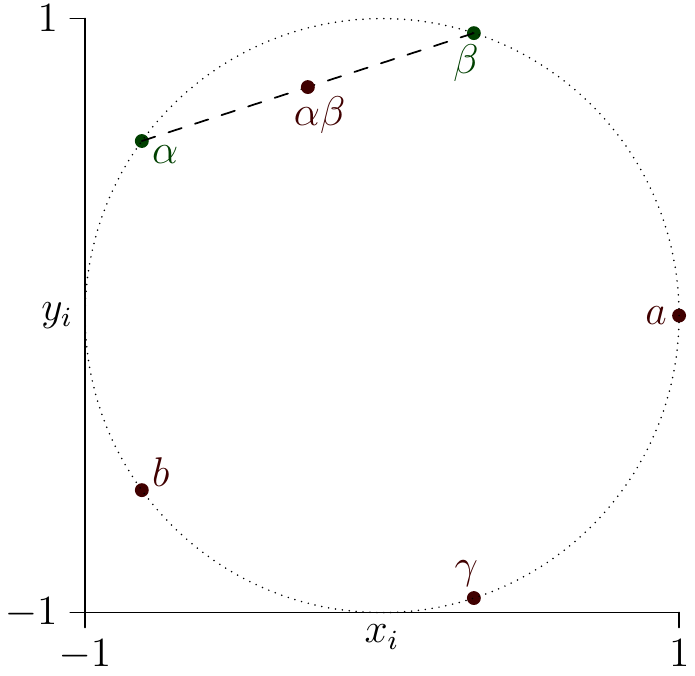}
    \end{center}
    \caption{The ideal statistics as displayed using $(x_i,y_i)$ for the first case considered. The  dotted circle shows all the possibilities for $X$ and $Z$ measurements according to quantum theory (i.e. a slice of the Bloch sphere). The four brown preparations are those to be plugged into the noncontextuality inequality, but it is only known that $P_{\alpha\beta}$ is some convex combination of the known preparations $P_\alpha$ and $P_\beta$ shown in green.}
    \label{fig:pentagon}
\end{figure}    
    
In the ideal case of pure states and projective measurements of the Pauli $X$ and $Z$ matrices, $(x_i,y_i) = (\sin \theta_i, \cos \theta_i)$, with $\theta_i = \left(\frac15 n_i + \frac1{20}\right)2\pi$ and, for example, $(n_a,n_b,n_\alpha,n_\beta,n_\gamma) = (1,3,4,0,2)$, as in Fig.~\ref{fig:pentagon}. Notice that for any values of $p_\alpha$ and $p_\beta$ we have that $(x_a,y_b)$, $(x_\gamma,y_\gamma)$, $(x_b,y_b)$ and $(x_{\alpha\beta},y_{\alpha\beta})$ are the vertices of a convex quadrilateral in clockwise order and so \eqref{eq:simpleineq} applies and we can check its violation using $V_\alpha$ and $V_\beta$ as argued above. For these ideal statistics we obtain
\begin{equation}V_\alpha = \frac{1}{4} \left(5 \sqrt{5}-\sqrt{10 \left(\sqrt{5}+5\right)}+5\right) \approx 1.9 > 0\label{eq:firstValpha}
\end{equation}
and
\begin{equation}V_\beta = \frac{1}{4} \left(5 \sqrt{5}-2 \sqrt{5 \left(2 \sqrt{5}+5\right)}+5\right) \approx 0.6 > 0.
\end{equation} For an actual experiment the two $V_i$ can simply be calculated from \eqref{eq:videf}.

We say ``for example'' above because we have no control over which preparations appear in \eqref{eq:linetriangle}. We can enumerate the possible $a,b,\alpha,\beta,\gamma$ as follows. Firstly $P_a$ can be any of the 5 preparations. $P_b$ must be non-adjacent to $P_a$, which gives a factor of two. However, the difference between $a$ and $b$ is merely conventional, so we can always take the shortest path around the pentagon from $P_a$ to $P_b$ to be clockwise. $P_\gamma$ is then whichever preparation is between $P_a$ and $P_b$ on that path. This leaves two preparations which must be $P_\alpha$ and $P_\beta$, again the assignment is conventional and so we adopt the convention that the shortest path from $P_\alpha$ to $P_\beta$ is clockwise. We can summarise these conventions by saying that $P_a$ can be any of the five preparations, and $(P_a,P_\gamma,P_b,P_\alpha,P_\beta)$ are arranged clockwise. The example of the previous paragraph indeed follows this convention.

\begin{figure}
    \begin{center}
      \includegraphics[width=\textwidth]{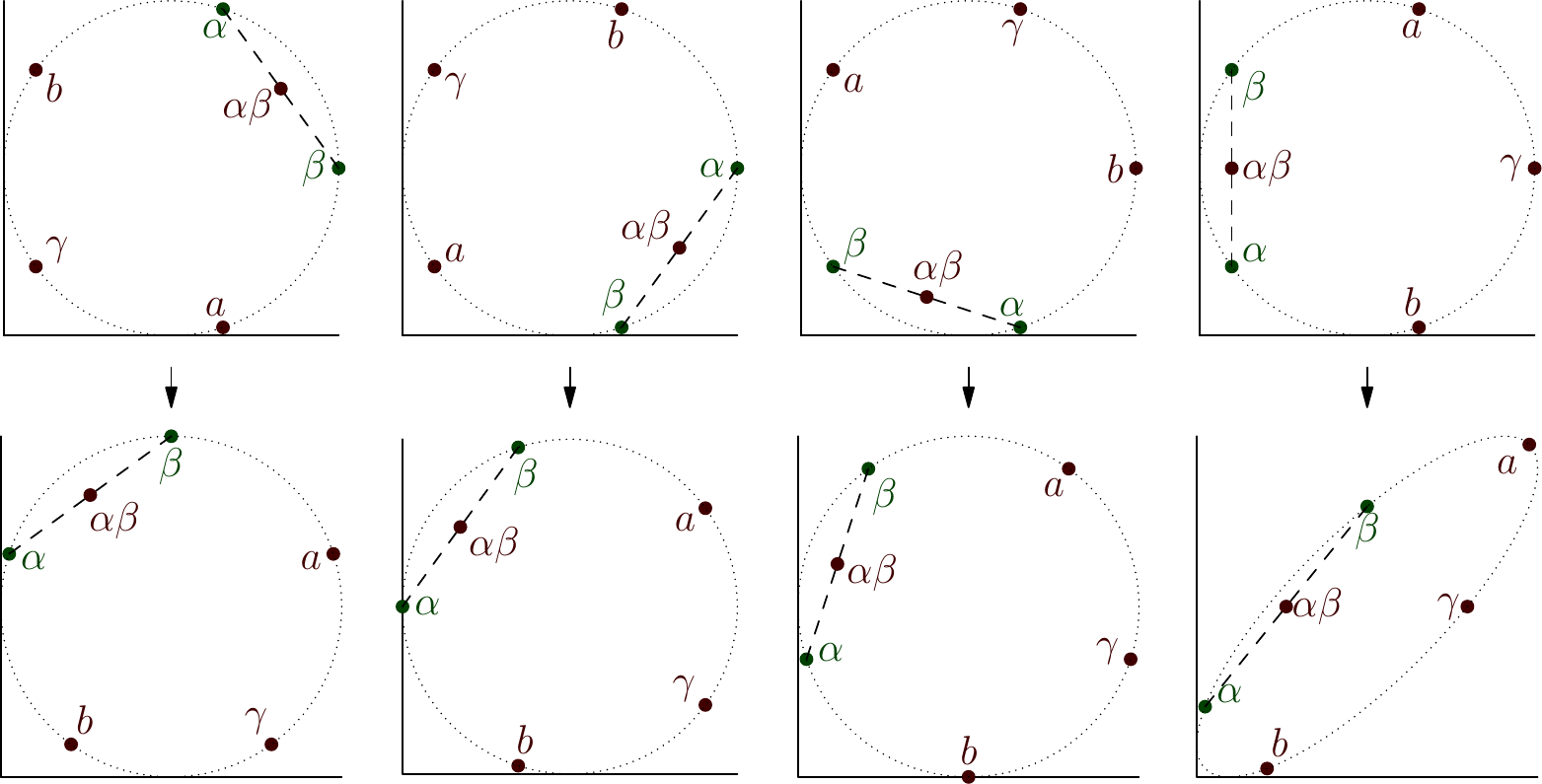}
    \end{center}
    \caption{The other four cases of which equivalence is preserved by the unknown measurement, left to right in the same order as considered in the text. The top row shows the original $(x_i,y_i)$ similarly to Fig.~\ref{fig:pentagon}. The bottom row shows the result of the transformation described for that case in the text. Notice that all the transformation are orientation-preserving and thus the ordering of the relevant quadrilateral's vertices are preserved, whilst the individual vertices are restored to similar positions as in Fig.~\ref{fig:pentagon} to ensure the noncontextuality inequality is violated. Notice also that in the final case the transformation distorts the dotted curve because the two measurements are no longer represented by orthogonal directions in the Bloch sphere.}
    \label{fig:othercases}
\end{figure}    

We have already considered $n_a = 1$ above. Let us now consider the remaining four possibilities, as illustrated in Fig.~\ref{fig:othercases}. If $(n_a,n_b,n_\alpha,n_\beta,n_\gamma) = (2,4,0,1,3)$ we swap the role of the $X$ and $Z$ measurements, and flip the outcome of the $Z$ measurement, so that $(x_i,y_i) = (-\cos\theta_i, \sin\theta_i)$ giving
\begin{equation}
    V_\alpha = 5-\sqrt{\frac{5}{2} \left(5-\sqrt{5}\right)} \approx 2.4 > 0 \label{eq:secondValpha}
\end{equation}
and $V_\beta$ like \eqref{eq:firstValpha}. (This and all the transformations considered below preserve orientation, here because $\det{\begin{pmatrix}0&-1\\1&0\end{pmatrix}}=1 > 0$, and thus the condition on the ordering of the convex quadrilaterals needed for the validity of the argument in the first case is unaffected.)

If $(n_a,n_b,n_\alpha,n_\beta,n_\gamma) = (3,0,1,2,4)$ we can leave $X$ and $Z$ as in the first example, but flip the outcomes of both measurements, so that $(x_i,y_i) = (-\sin\theta_i, -\cos\theta_i)$ giving $V_\alpha$ as in \eqref{eq:firstValpha} and $V_\beta$ like \eqref{eq:secondValpha}.

If $(n_a,n_b,n_\alpha,n_\beta,n_\gamma) = (4,1,2,3,0)$ we can again swap the role of $X$ and $Z$, this time flipping the outcome of the $X$ measurement, so that $(x_i,y_i) = (\cos\theta_i, -\sin\theta_i)$ giving values of $V_\alpha$ and $V_\beta$ the other way round than the first example.

The most difficult case is $(n_a,n_b,n_\alpha,n_\beta,n_\gamma) = (0,2,3,4,1)$. No post-processing of the $X$ and $Z$ measurements gives positive $V_\alpha$ and $V_\beta$. This is where we need the additional known measurement. We now consider a measurement at an angle $\frac3{10}\pi$ along with $Z$ so that $(x_i,y_i) = \left(\sin\left(\theta_i+\frac3{10}\pi\right), \cos\theta_i\right)$ giving 
\begin{equation}
    V_\alpha = \frac{5}{8} \left(3 \sqrt{10-2 \sqrt{5}} - 2 \sqrt{5} -2\right)\approx 0.4 > 0
\end{equation}
and
\begin{equation}
    V_\beta = \frac{5}{4}\left(\sqrt{2 \left(\sqrt{5}+5\right)}-\sqrt{5}-1\right) \approx 0.7 > 0.
\end{equation}

\subsection{An algorithm for checking arbitrary statistics}\label{subsec:algo}

Consider the state space of functions from (known) measurements to outcomes, i.e. deterministic assignments $\lambda(M) = k$.  For a preparation $P$ we define the assignment polytope $\Delta_P$ of distributions consistent with $P$'s statistics, i.e. distributions $\mu(\lambda)$ such that for all measurements $M$:
\begin{equation}
  \sum_{\lambda} \delta_{k \lambda(M)}\mu(\lambda) = P^e(k|P,M).
  \label{eq:deltapconsist}
\end{equation}
Clearly this is nonempty because, we have e.g. $\mu \in \Delta_P$ where $\mu(\lambda) = \prod_M P^e(\lambda(M)|P,M)$. $\Delta_P$ is defined by linear inequalities (positivity) and the linear equality \eqref{eq:deltapconsist}. The first step of the algorithm is to convert this into a list of vertices, using e.g. CDD \cite{Fukuda}.

The second step is, for every pair of disjoint subsets $\{P_{i_1},P_{i_2},\dotsc\}$ and $\{P_{j_1},P_{j_2},\dotsc\}$, to use a simple linear program (based on the lists of vertices) to check if corresponding convex hulls $\con{\{\Delta_{P_{i_1}},\Delta_{P_{i_2}},\dotsc\}}$ and $\con{\{\Delta_{P_{j_1}},\Delta_{P_{j_2}},\dotsc\}}$ intersect.

\begin{theorem}\label{theorem:comput}
If the above algorithm finds there are no intersections, there is no noncontextual model.
\end{theorem}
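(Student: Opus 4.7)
We prove the contrapositive: if there exists a noncontextual model reproducing the known statistics, together with some extension to the unknown measurements, then the algorithm must find an intersection. By the finite-ontic-state result of Ref.~\cite{Schmid2017} cited at the start of Section~\ref{subsec:nofinite}, we may assume the model has a finite ontic state space $\Lambda$ with distributions $\mu(\lambda|P_i)$ and response functions $P^t(k|\lambda,M)$ for every (known or unknown) measurement $M$. Refine it to an auxiliary model whose ontic states are deterministic assignments $\lambda'$ from all measurements to their outcomes, with distributions $\mu'(\lambda'|P_i) = \sum_\lambda \mu(\lambda|P_i) \prod_M P^t(\lambda'(M)|\lambda, M)$ and deterministic responses $\delta_{k,\lambda'(M)}$. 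Because $\mu\mapsto\mu'$ is linear on distributions, the refined model reproduces all operational statistics and preserves noncontextuality; marginalising over the unknown-measurement components of $\lambda'$ then gives distributions $\nu(\lambda^K|P_i)\in \Delta_{P_i}$ by construction.

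Now extract an operational equivalence at the \emph{full} (known plus unknown) level. The vectors $\vec P^{\mathrm{full}}_i$ lie in an at-most $(m+u)$-dimensional affine space, where $u$ is the number of unknown measurements; whenever they are affinely dependent one obtains $\sum_i Q(P_i)\vec P^{\mathrm{full}}_i = \sum_i Q'(P_i)\vec P^{\mathrm{full}}_i$ with $Q\neq Q'$, and by subtracting their pointwise minimum and renormalising we may assume $Q$ and $Q'$ have disjoint supports. Noncontextuality of the refined model then forces $\sum_i Q(P_i)\mu'(\cdot|P_i) = \sum_i Q'(P_i)\mu'(\cdot|P_i)$, and applying the linear marginalisation over the unknown components yields $\sum_i Q(P_i)\nu(\cdot|P_i) = \sum_i Q'(P_i)\nu(\cdot|P_i)$. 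Since each $\nu(\cdot|P_i)\in\Delta_{P_i}$, this equation witnesses an intersection of $\con{\{\Delta_{P_i}:Q(P_i)>0\}}$ with $\con{\{\Delta_{P_j}:Q'(P_j)>0\}}$ over disjoint subsets, which is precisely what the algorithm searches for.

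The main obstacle I anticipate is the second step, specifically guaranteeing that a nontrivial full-level operational equivalence always exists. The affine-dimension counting works cleanly when $n>m+u+1$, but for smaller $n$ one must either invoke an implicit nontriviality hypothesis (e.g.\ that the scenario contains some known-level equivalence that can be lifted) or argue directly at the level of the $\Delta_P$ polytopes, showing that any admissible $\nu_i$ arising from a noncontextual model must satisfy some intrinsic affine dependence visible to the algorithm. This polytope-level sharpening, rather than the bookkeeping of the refinement-and-marginalisation step, is where I expect the technically delicate work to reside.
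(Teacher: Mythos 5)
Your proposal is correct and follows essentially the same route as the paper: reduce to a deterministic-assignment model, marginalise over the unknown measurements to land in the polytopes $\Delta_{P_i}$, obtain a disjoint-support operational equivalence at the full (known plus unknown) level by a Radon/Carath\'eodory-type argument, and push it through noncontextuality to an intersection of the convex hulls. The obstacle you anticipate is handled in the paper exactly as your first alternative suggests: the proof explicitly assumes $n \geq d+2$ preparations (with $d$ the full state-space dimension) and invokes Lemma~\ref{lemma:nocturnal} to guarantee the nontrivial equivalence.
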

\begin{proof}
In Ref.~\cite{Schmid2017} it is shown that for a finite number of measurements and outcomes, without loss of generality a noncontextual model can be built from a finite set of ontic states, namely the vertices of the ``noncontextual measurement-assignment polytope''. Since we are not considering any restrictions from measurement noncontextuality, the latter is just the polytope of arbitrary conditional probability distributions, whose vertices are the deterministic assignments.

So suppose there is a noncontextual model for the entire scenario (known and unknown measurements), which, by the above, we can take to use deterministic assignments. We can then coarse-grain this model by identifying ontic states that give the same outcomes for the known measurements (i.e. they differ only in their assignments to unknown measurements), assigning the coarse ontic states probabilities equal to the sum of the probabilities assigned to the corresponding original states. The coarse-grained model will no longer necessarily be able to reproduce the statistics of the unknown measurements, but any preparations that were operationally equivalent for all of the measurements will still have the same distribution in the coarse-grained model. The predictions for the known measurements will still be correct, meaning the distributions over ontic states satisfy \eqref{eq:deltapconsist}.

Suppose we have $n \geq d+2$ preparations, where $d$ is the dimension of the state space including the unknown measurements. Then by Lemma~\ref{lemma:nocturnal} we know that there are disjoint set of preparations whose convex hulls intersect, i.e. some mixture of preparations $p_i P_i$ in the first set is operationally equivalent to some mixture of preparations $q_i P_i$ in the second. Consider the coarse-grained noncontextual  model, with shorthand $\mu_i(\lambda) = \mu(\lambda|P_i)$. Since we know that $\mu_i \in \Delta_{P_i}$, and $\sum_i p_i \mu_i = \sum_i q_i \mu_i$, the convex hulls of the $\Delta_{P_i}$ with $p_i > 0$ must intersect the convex hull of the $\Delta_{P_i}$ with $q_i > 0$. Note that we have to check all the disjoint sets because we don't know which one Lemma~\ref{lemma:nocturnal} will give.
\end{proof}

\section{Conclusions}
\label{sec:conclusions}
\paragraph{Contribution.}
Although determining if two preparations are operationally equivalent requires access to a tomographically complete set, we have shown that the statistics from an incomplete set can be enough to determine that there exist operationally equivalent preparations, and furthermore that we can know enough about that equivalence to be sure that it cannot be represented noncontextually. This shows that the most significant assumption made in previous experimental tests of contextuality can be significantly relaxed. It is not necessary to be able to \emph{perform} a tomographically complete set of measurements, but merely to know how large such a set might be.

We can give a concrete example of a possible application of our results.\footnote{This example was suggested by Rob Spekkens.} Consider a recent paper which applies the concept of tomographic completeness to experimental data \cite{mazurek2017}. An anonymous referee pointed out that there are reasons (e.g. \cite{mueller2013}) to think that for the simplest systems, the number of measurements in a tomographically complete set may be related to the dimension of physical space. This number might not be three. For example, string theory suggests that, despite appearances, the number of spatial dimensions is in fact 9, 10 or 25. The referee argued that the same mechanisms that make those spatial dimensions hard to observe could also make the extra measurements hard to implement (e.g. those measurements would involve rotating systems into the extra dimensions). In short, we could have reasons to suspect that systems which appear to behave like qubits (in particular, 3 binary measurements being tomographically complete) may in fact be described by a theory with up to 25 measurements in the true tomographically complete set. Our results would, in principle, allow a noncontextual model of such a theory to be excluded just using the qubit-like statistics we can already access.

\paragraph{Directions and open questions.}
The number of preparations and known measurements used in the proof of Theorem~\ref{theorem:baggy} is exponential in the number of unknown measurements. The only lower bound we have is from Section~\ref{subsec:needadditionalprep}, namely that at least one extra preparation is required for each unknown measurements, i.e. a linear relationship. It would be good to reduce this gap. In particular, a more efficient proof, perhaps a generalisation of the results in Section~\ref{subsec:simplestcase},  would be useful for experiments and any applications of this form of contextuality.

It would also be useful to clarify the relation between Theorem~\ref{theorem:baggy} and Hardy's proof that there are no ontological models of a qubit using a finite number of ontic states \cite{Hardy2004}. Theorem~\ref{theorem:baggy} is logically a strengthening of Hardy's result because if there was a model with $k$ ontic states then $k-1$ binary measurements in the tomographically complete set would be enough to determine the ontic state and hence ensure a noncontextual model (we used a similar argument in Section~\ref{subsec:needadditionalmeas}). There cannot be a simple argument in the other direction because it is trivial to increase the number of ontic states used in a model without increasing the number of measurements required for a noncontextual model, for example by duplicating every ontic state and splitting the probabilities assigned to each equally. But even though our result is logically stronger, our proof uses exactly the same features of a qubit's statistics as \cite{Hardy2004}. This leads to the question of whether any set of statistics that gives a version of Hardy's result also gives a version of ours. One could ask much the same question for other results that can be viewed as bounds on the number of ontic states required to model some scenario \cite{harrigan2007}, such as classical dimension witnesses \cite{gallego2010,bowles2014} and memory requirements from Kochen-Specker contextuality 
\cite{kleinmann2011,karanjai2018}. If such a translation is possible, it may enable a solution to efficiency problem discussed above.

We only showed in Theorem~\ref{theorem:comput} that our computational algorithm gives a sufficient condition for being able to demonstrate contextuality. It would be useful to understand in what circumstances, if any, the lack of intersections is a necessary condition.

Finally, it is of course crucial for the application of our results to search for more examples of physical reasoning that might bound the number of unknown measurements in an experiment.

\begin{acknowledgements}
We thank Elo\'isa Grifo and Jack Jeffries for the nocturnal lemma and courier services, and Rob Spekkens for discussions. This project began with discussions between LdR and MP at Perimeter Institute for Theoretical Physics, whose support they acknowledge.
LdR also acknowledges support from ERC AdG NLST, from EPSRC grant \emph{DIQIP}, and from the Institute for Quantum Computing at University of Waterloo.
MP also acknowledges support from the
Royal Commission for the Exhibition of 1851.
Research at Perimeter Institute is supported by the government of Canada through Industry Canada and by the Province of Ontario through the Ministry of Economic Development \& Innovation. 

\end{acknowledgements}

\appendix

\section{Geometrical Lemma}
\label{appendix:geometry}
First we recall Carath\'eodory's theorem \cite{Caratheodory1911}. 

\begin{theorem}[Carath\'eodory]
\label{thm:caratheodory}
Let  $\mathcal S$   be a set of points in $\mathbbm R ^d$.  Then every point in the convex hull of $\mathcal S$ is also contained in a simplex whose vertices are at most $d+1$ points of $\mathcal S$, 
$$x \in \con {\mathcal S} \implies \exists\ \mathcal A  \subseteq \mathcal S: \quad x \in \con{\mathcal A} \quad \& \quad  |\mathcal A|\leq d+1.$$
\end{theorem}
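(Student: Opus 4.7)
The plan is to apply the classical reduction-by-linear-dependence argument. Start from an arbitrary representation of $x \in \con{\mathcal{S}}$ as a finite convex combination $x = \sum_{i=1}^{n} \lambda_i s_i$ with $s_i \in \mathcal{S}$, $\lambda_i \geq 0$, and $\sum_i \lambda_i = 1$; such a representation exists by definition of the convex hull, possibly with $n$ very large. If $n \leq d+1$, we are done immediately, taking $\mathcal{A} = \{s_1,\dotsc,s_n\}$. Otherwise we show we can always strictly reduce the number of points used, and then iterate until at most $d+1$ points remain.

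To reduce $n$ when $n > d+1$, note that the $n-1$ difference vectors $s_i - s_1$ ($i=2,\dotsc,n$) lie in $\mathbb{R}^d$ and hence must be linearly dependent. So there exist scalars $\mu_2,\dotsc,\mu_n$, not all zero, with $\sum_{i=2}^{n} \mu_i (s_i - s_1) = 0$. Setting $\mu_1 = -\sum_{i=2}^{n} \mu_i$ gives coefficients $\mu_1,\dotsc,\mu_n$, not all zero, satisfying
\begin{equation*}
\sum_{i=1}^{n} \mu_i\, s_i = 0, \qquad \sum_{i=1}^{n} \mu_i = 0.
\end{equation*}
Because the $\mu_i$ sum to zero but are not all zero, at least one $\mu_i$ is strictly positive.

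Now consider the one-parameter family of coefficient assignments $\lambda_i(t) := \lambda_i - t\mu_i$. For every $t$, the identity $x = \sum_i \lambda_i(t) s_i$ still holds and $\sum_i \lambda_i(t) = 1$, so to preserve convexity we only need $\lambda_i(t) \geq 0$. Increase $t$ from $0$ up to $t^* = \min\{\lambda_i/\mu_i : \mu_i > 0\}$, which is well-defined and finite. At $t = t^*$ all $\lambda_i(t^*)$ remain non-negative while at least one is zero, so $x$ is now expressed as a convex combination of strictly fewer than $n$ elements of $\mathcal{S}$. Iterating this reduction terminates in finitely many steps with a representation supported on some $\mathcal{A} \subseteq \mathcal{S}$ of size at most $d+1$, yielding $x \in \con{\mathcal{A}}$.

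There is no real obstacle here: the argument is entirely elementary linear algebra and the only care required is the bookkeeping to ensure that (i) the perturbation preserves both normalisation and non-negativity, and (ii) the process strictly reduces the support size at each step so that finite termination is guaranteed. The hypothesis $n > d+1$ is used precisely to guarantee the linear dependence that drives the reduction.
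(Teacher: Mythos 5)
Your proof is correct: it is the standard reduction-by-affine-dependence argument, and the key steps (the affine dependence $\sum_i \mu_i s_i = 0$ with $\sum_i \mu_i = 0$, the choice $t^* = \min\{\lambda_i/\mu_i : \mu_i > 0\}$ which keeps every coefficient non-negative while killing at least one, and the finite termination) are all handled properly. Note, however, that the paper does not prove this statement at all; it simply recalls Carath\'eodory's theorem with a citation to the original 1911 work, so there is no proof in the paper to compare yours against --- yours is the classical textbook argument and would serve as a self-contained substitute for the citation.
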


This gives the following useful Lemma.
\begin{lemma}
\label{lemma:nocturnal}
Let $\mathcal S = \{ x_i \}_{i=1}^n$  be a set of $n\geq d+2$ points in $\mathbbm R ^d$. 
Then there exist two disjoint subsets of points whose convex hulls intersect, 
$$\exists \  \mathcal A, \mathcal B \subset \mathcal S: \quad
\mathcal A \cap \mathcal B = \emptyset  \quad \& \quad 
\con{\mathcal A} \cap \con{\mathcal B} \neq \emptyset.
$$
Furthermore, $|\mathcal A|, |\mathcal B| \leq d+1$, that is $ \con{\mathcal A} $ and $ \con{\mathcal B}$ are simplices. 
\end{lemma}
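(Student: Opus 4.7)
The plan is to prove this as an application of Radon's partition theorem combined with the already-stated Carath\'eodory theorem; indeed the statement is essentially Radon's theorem with the Carath\'eodory bound attached.

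First I would reduce to the case $n = d+2$, since if we find disjoint subsets $\mathcal A, \mathcal B$ with intersecting convex hulls inside any $(d+2)$-element subsubset of $\mathcal S$, we are done for $\mathcal S$. So fix $d+2$ points $y_1,\dotsc,y_{d+2}$ from $\mathcal S$. Consider their lifts $\tilde y_i := (y_i,1) \in \mathbb R^{d+1}$. These are $d+2$ vectors in a $(d+1)$-dimensional space, so they are linearly dependent: there exist coefficients $\alpha_1,\dotsc,\alpha_{d+2}$, not all zero, such that
\begin{equation}
\sum_{i=1}^{d+2} \alpha_i y_i = 0 \qquad \text{and} \qquad \sum_{i=1}^{d+2} \alpha_i = 0.
\end{equation}
Let $I = \{i : \alpha_i > 0\}$ and $J = \{i : \alpha_i < 0\}$. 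Both are nonempty, since not all $\alpha_i$ vanish but they sum to zero. Moreover $I \cap J = \emptyset$ by construction. Setting $S := \sum_{i\in I}\alpha_i = -\sum_{i\in J}\alpha_i > 0$, dividing through by $S$ gives
\begin{equation}
\sum_{i \in I} \frac{\alpha_i}{S}\, y_i \;=\; \sum_{i \in J} \frac{-\alpha_i}{S}\, y_i,
\end{equation}
where both sides are genuine convex combinations (nonnegative coefficients summing to $1$). Hence $\con{\{y_i\}_{i\in I}}$ and $\con{\{y_i\}_{i\in J}}$ intersect at this common point $x$, and the two index sets are disjoint. Taking $\mathcal A := \{y_i\}_{i\in I}$ and $\mathcal B := \{y_i\}_{i\in J}$ proves the first part.

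For the size bound, I would invoke Carath\'eodory (Theorem~\ref{thm:caratheodory}): since $x \in \con{\mathcal A}$ with $\mathcal A \subset \mathbb R^d$, there is a subset $\mathcal A' \subseteq \mathcal A$ of size at most $d+1$ with $x \in \con{\mathcal A'}$, and similarly a subset $\mathcal B' \subseteq \mathcal B$ of size at most $d+1$ with $x \in \con{\mathcal B'}$. Because $\mathcal A \cap \mathcal B = \emptyset$, any subsets $\mathcal A' \subseteq \mathcal A$ and $\mathcal B' \subseteq \mathcal B$ are automatically disjoint, so replacing $\mathcal A,\mathcal B$ by $\mathcal A',\mathcal B'$ preserves both the disjointness and the intersection of their convex hulls, while guaranteeing each has cardinality at most $d+1$ and is therefore a simplex (in the degenerate sense: the convex hull of at most $d+1$ points in $\mathbb R^d$).

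I do not expect a significant obstacle here: the essential work is the linear-algebraic observation that $d+2$ affinely-lifted vectors in $\mathbb R^{d+1}$ must be linearly dependent, which is entirely standard. The only mildly subtle point is remembering to apply Carath\'eodory independently to each side, and to note that disjointness is preserved under taking subsets, so the cardinality reduction does not interfere with the partition property.
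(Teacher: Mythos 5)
Your proof is correct, but it takes a genuinely different route from the paper's. You prove the statement as the classical Radon partition theorem: restrict to $d+2$ of the points, lift them to $\mathbb{R}^{d+1}$ as $(y_i,1)$, extract a linear dependence whose coefficients automatically sum to zero, and split the indices by the sign of the coefficients; Carath\'eodory then only enters to enforce the cardinality bound. The paper instead starts from the centroid $x=\frac1n\sum_i x_i$, invokes Carath\'eodory to obtain a second convex decomposition of $x$ supported on at most $d+1$ points, subtracts the two decompositions termwise to manufacture disjoint supports, renormalises, and applies Carath\'eodory once more to each side. Both arguments ultimately hinge on the same object --- an affine dependence $\sum_i\alpha_i x_i=0$, $\sum_i\alpha_i=0$, split into its positive and negative parts --- but you construct it directly by linear algebra, while the paper derives it by comparing two distinct convex representations of a single point. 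Your version is the standard one and is arguably tighter: after reducing to $d+2$ points, $I$ and $J$ are nonempty disjoint subsets of a $(d+2)$-element set, so each already has at most $d+1$ elements and your final application of Carath\'eodory is harmless but redundant. The paper's version buys a proof built entirely out of the Carath\'eodory theorem it has just stated, and an explicit witness point obtained from the centroid, which is what its accompanying figure illustrates. No gaps: your claim that $I$ and $J$ are both nonempty (since the $\alpha_i$ are not all zero yet sum to zero) and your observation that disjointness survives passing to subsets are both sound.
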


\begin{figure}[t]
    \centering
    
    \includegraphics[width=0.4\textwidth]{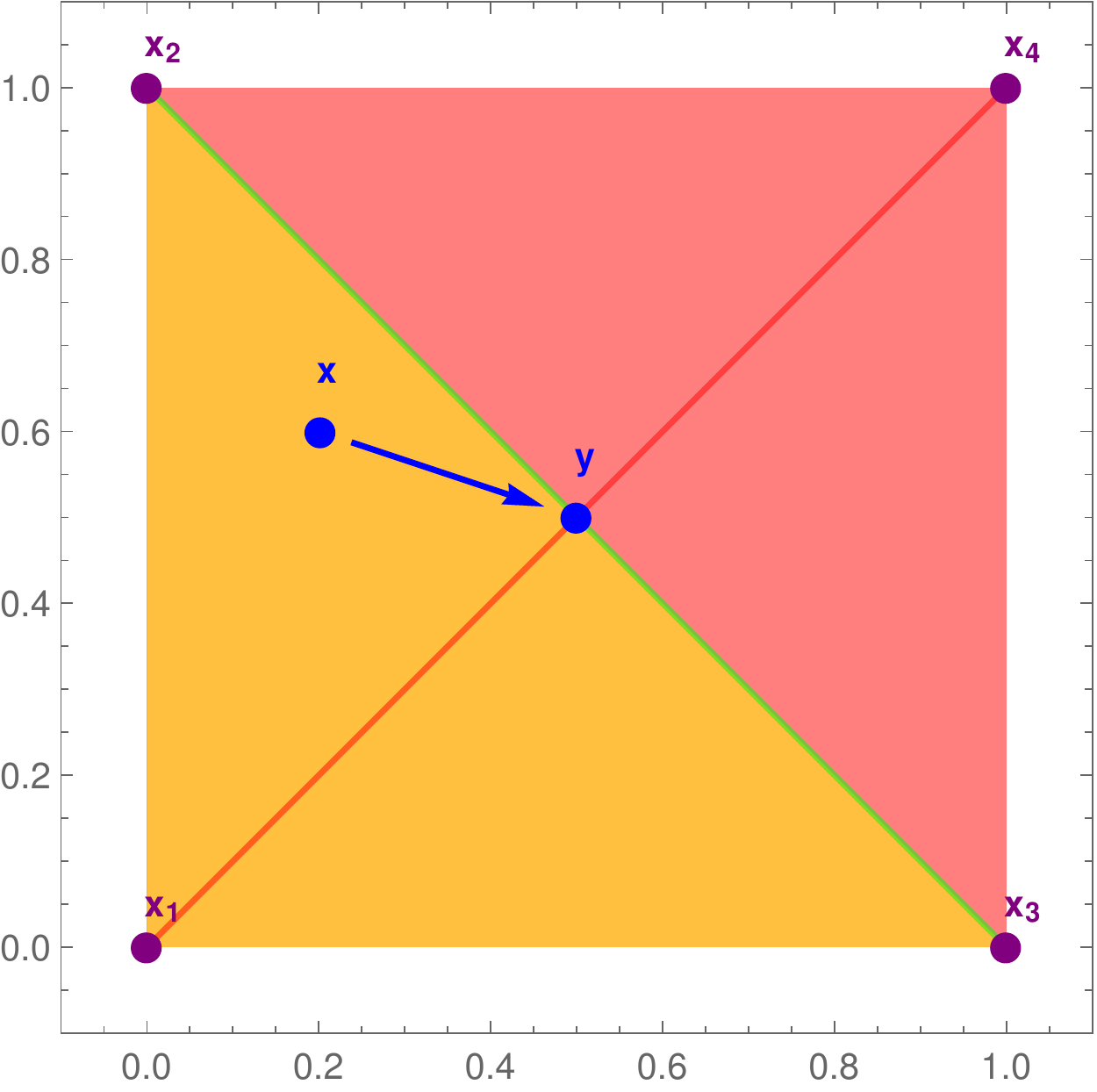}
    \quad
    \caption{{\bf Illustration of Lemma~\ref{lemma:nocturnal}.} 
    We start from a set of four points, $\mathcal S = \{x_1, \dots, x_4\}$ and a point $x= (0.2,\ 0.6)$   in the interior of $\con {\mathcal S}$. This point admits the decomposition $x= 0.3 \ x_1 + 0.5\ x_2 +0.1\  x_3 + 0.1\ x_4$.
    By Carath\'eodory's theorem (Theorem~\ref{thm:caratheodory}), $x$ also admits a decomposition into the yellow simplex, $x=0.2 \ x_1 + 0.6\ x_2 +0.2\  x_3$. The two decompositions are not disjoint, but Lemma~\ref{lemma:nocturnal} allows us to find a new point $y$ with disjoint decompositions. To do so, first we compare the two decompositions and subtract the smallest terms, 
    $(0.3-0.2) \ x_1 + (0.5-0.5)\ x_2 +(0.1-0.1)\  x_3 + (0.1-0.0)\ x_4 = (0.2-0.2) \ x_1 + (0.6- 0.5)\ x_2 +(0.2-0.1)\  x_3
    $, that is $0.1\ x_1 + 0.1\ x_4 =  0.1\ x_2 +0.1\  x_3 $. Next, we renormalize this new point, $y=  (0.1\ x_1 + 0.1\ x_4)/0.2 =(  0.1\ x_2 +0.1\  x_3)/0.2 = (0.5,\ 0.5) $. We found a decomposition into two disjoint simplices, generated by $\{x_1, x_4 \}$ and $\{x_2, x_3 \}$.
    }
    \label{fignocturnal_square}
\end{figure}

\begin{proof}
We take a point in the convex hull of $\mathcal S$,
$x = 1/n \sum_{i=1}^n x_i$. 
By Carath\'eodory's theorem, there exists a subset of at most $d+1$ points whose convex hull contains $x$. We have
$$x = \sum_{i=1}^N \frac{1}n x_i  = \sum_{i=1}^{d+1} b_i x_i  .$$
To find disjoint subsets of extremal points, we subtract $\sum_i \min(1/n, b_i)\ x_i $ from both sides,
$$ \sum_{i: 1/n > b_i} (1/n-b_i) x_i  = \sum_{i: 1/n <b_i} (b_i-1/n) x_i .$$
Now we renormalize to obtain a valid point
\begin{align*}
y &= \frac2{\sum_{i} |1/n-b_i|} 
\sum_{i: 1/n > b_i} (1/n-b_i)\ x_i 
= \frac2{\sum_{i} |1/n-b_i|} \sum_{i: 1/n <b_i} (b_i-1/n)\ x_i .
\end{align*}
Finally we apply Carath\'eodory's theorem to each decomposition, to ensure that we end up with two disjoint simplices. 
\end{proof}

\section[Noise tolerant proof of Theorem 5]{Noise tolerant proof of Theorem \ref{theorem:baggy}}
\label{appendix:noisybaggy}
As before, take $n=2^k$ non-orthogonal pure states as the preparations. Let the measurements be the projections onto those states. For each preparation $P_i$ (with corresponding measurement $M_i$), form a vector $\vec v_i$ of length $\abs{\Lambda}$ whose $\lambda$-th component is $1$ where $P^t(1|\lambda, M_i) < \frac{\eta}2$ and $0$ otherwise.

Now, the sum of the components of $\vec \mu_i$ where $\vec v_i$ is zero (i.e. $P^t(1|\lambda, M_i) \geq \frac\eta2$) must be less than or equal to $\epsilon\frac2\eta$, because projection onto the same state fails with probability at most $\epsilon$. By assumption $\epsilon\frac2\eta < \frac\eta2$. On the other hand for $j \neq i$, the sum of the components of $\vec \mu_{j}$ where $\vec v_i$ is zero must be at least $\frac\eta2$, because projection onto the $i$-th state fails at least $\eta$ of the time, and the terms where $\vec v_i$ is one can contribute at most $\frac\eta2$ to that. Hence the $\{\vec \mu_i\}$ vary in the direction of $\vec v_i$.

By the definition of $\eta$ we have $P^e(1|P_i, M_j) \geq \eta$ for all $j \neq i$. Suppose there was $j \neq i$ with $\vec v_i = \vec v_j$. We have that
\begin{equation}
  \eta \leq P^e(1|P_i, M_j) = \sum_\lambda P^t(1|\lambda,M_j) \mu_i(\lambda).\label{eq:etasum}
\end{equation}
Consider first the terms of the above sum where the corresponding component of $\vec v_i$ is 1:
\begin{equation}
    \sum_{\lambda | v_i(\lambda) = 1} P^t(1|\lambda,M_j) \mu_i(\lambda) \leq \frac{\eta}2  \sum_{\lambda | v_i(\lambda) = 1} \mu_i(\lambda) \leq \frac{\eta}2.
\end{equation}
This leaves the terms whose corresponding component is 0:
\begin{multline}
  \sum_{\lambda | v_i(\lambda) = 0} P^t(1|\lambda,M_j) \mu_i(\lambda) \leq \sum_{\lambda | v_i(\lambda) = 0} \mu_i(\lambda) \leq \sum_{\lambda | v_i(\lambda) = 0} \frac2\eta P^t(1|\lambda,M_i)\mu_i(\lambda) \\\leq \frac2\eta \sum_{\lambda}  P^t(1|\lambda,M_i)\mu_i(\lambda) = \frac2\eta P^e(1|P_i,M_i) \leq \frac{2\epsilon}{\eta},
\end{multline}
where we have used that $P^t(1|\lambda, M_i) \geq \frac\eta2$ for such $\lambda$ and the definition of $\eta$. Putting these back into \eqref{eq:etasum} gives $\eta \leq \frac\eta2 + \frac{2\epsilon}\eta$ and thus $\epsilon \geq \frac14 \eta^2$. Hence if $\epsilon < \frac14 \eta^2$ there cannot be a $j \neq i$ with $\vec v_i = \vec v_j$.

We now conclude as in the noiseless case. Taking the $\{\vec v_i\}$ for all values as rows of a matrix $M$ gives a $2^k \times \abs{\Lambda}$ $(0,1)$-valued matrix where every row is distinct. Hence by Lemma~\ref{lemma:binarymatrixrank} the rank of $M$ is at least $k$. So the $\{\vec \mu_i\}$ vary in at least $k$ linearly independent directions, i.e. have an affine dimension at least $k$ and so Lemma~\ref{lemma:dimensionformulation} applies.

\bibliographystyle{apsrev4-1} 
\bibliography{context}

\end{document}